\newtheorem{theorem}{Theorem}
\newtheorem{lemma}{Lemma}
\newtheorem{corollary}{Corollary}
\newtheorem{conjecture}{Conjecture}
\newtheorem{claim}{Claim}
\newcommand{\poly}{\textrm{poly}}
\newcommand{\eps}{\varepsilon}
\newcommand{\Real}{\mathbb{R}}
\newcommand{\Ran}{\mathcal{R}}
\newcommand{\aRan}{R}
\newcommand{\GF}{\textbf{GF}}
\renewcommand{\log}{\lg}
\newcommand{\cP}{\mathcal{P}}
\newcommand{\bits}{\{0,1\}}
\title{Adapt or Die: Polynomial Lower Bounds for Non-Adaptive Dynamic Data Structures}
\author{Joshua Brody $\quad$ Kasper Green Larsen \\\\CTIC\thanks{Center for the Theory of Interactive Computation, a Center of the Danish National Research Foundation.} $\textrm{ }$ and $\textrm{ }$ MADALGO\thanks{Center for Massive Data
    Algorithmics, a Center of the Danish National Research
    Foundation. }, \\ Department of Computer Science, \\ Aarhus
  University, Denmark\\ E-mail: \texttt{joshua.e.brody@gmail.com, larsen@cs.au.dk}} \date{}
\begin{document}
\maketitle
\begin{abstract}
In this paper, we study the role \emph{non-adaptivity} plays in
maintaining dynamic data structures. Roughly speaking, a data
structure is non-adaptive if the memory locations it reads and/or
writes when processing a query or update depend only on the query or
update and not on the contents of previously read cells.  We study
such non-adaptive data structures in the cell probe model. This model
is one of the least restrictive lower bound models and in particular,
cell probe lower bounds apply to data structures developed in the
popular word-RAM model. Unfortunately, this generality comes at a high
cost: the highest lower bound proved for any data structure problem is
only polylogarithmic.  Our main result is to demonstrate that one can
in fact obtain polynomial cell probe lower bounds for non-adaptive data
structures.

To shed more light on the seemingly inherent polylogarithmic lower
bound barrier, we study several different notions of non-adaptivity and
identify key properties that must be dealt with if we are to prove
polynomial lower bounds without restrictions on the data structures.

Finally, our results also unveil an interesting connection between
data structures and depth-2 circuits.  This allows us to translate
conjectured hard data structure problems into good candidates for high
circuit lower bounds; in particular, in the area of linear circuits
for linear operators.  Building on lower bound proofs for data
structures in slightly more restrictive models, we also present a
number of properties of linear operators which we believe are worth
investigating in the realm of circuit lower bounds.
\end{abstract}

\thispagestyle{empty}
\newpage
\setcounter{page}{1}
\section{Introduction}
\label{sec:intro} 
Proving lower bounds on the performance of data structures has been an
important line of research for decades. Over time, numerous
computational models have been proposed, of which the \emph{cell probe
  model} of Yao~\cite{yao:cellprobe} is the least restrictive. Lower
bounds proved in this model apply to essentially any imaginable data
structure, including those developed in the most popular upper bound
model, the word-RAM. Much effort has therefore been spent on deriving
cell probe lower bounds for natural data structure problems.
Nevertheless, the highest lower bound that has been proved for
\emph{any} data structure problem remains just polylogarithmic.

In this paper, we consider a natural restriction of data structures,
namely \emph{non-adaptivity}. Roughly speaking, a non-adaptive data
structure is a data structure for which the memory locations read
when answering a query or processing an update depend only on the
query or update itself, and not on the contents of the previously read
memory locations. Surprisingly, we are able to derive polynomially
high cell probe lower bounds for such data structures.

\subsection{The Cell Probe Model}
In the cell probe model, a data structure consists of a collection of
memory cells, each storing $w$ bits. Each cell has an integer address
amongst $[2^w]=\{1,\dots,2^w\}$, i.e. we assume any cell has enough
bits to address any other cell. When a data structure is presented
with a query, the query algorithm starts reading, or \emph{probing},
cells of the memory. The cell probed at each step may depend
arbitrarily on the query and the contents of all cells probed so
far. After probing a number of cells, the query algorithm terminates
with the answer to the query.

A dynamic data structure in the cell probe model must also support
updates. When presented with an update, the update algorithm similarly
starts reading and/or writing cells of the data structures. We refer
jointly to reading or writing a cell as probing the cell. The cell
probed at each step, and the contents written to a cell at each step,
may again depend arbitrarily on the update operation and the cells
probed so far.

The query and update times of a cell probe data structure are defined as the number of cells probed when answering a query or update respectively.
The space usage is simply defined as the largest address used by
any cell of the data structure.

\subsection{Previous Cell Probe Lower Bound Techniques}
\label{sec:previous}
As mentioned, the state-of-the-art techniques for proving cell probe
lower bounds unfortunately yield just polylogarithmic bounds. In the
following, we give a brief overview of the highest lower bounds that
has been achieved since the introduction of the model, and also the
most promising line of attack towards polynomial lower bounds.

\paragraph{Static Data Structures.}
One of the most important early papers on cell probe lower bounds for
static data structures is the paper of Miltersen et
al.~\cite{milt:asym}. They demonstrated an elegant reduction to data
structures from an assymmetric communication game. This connection
allowed them to obtain lower bounds of the form $t_q = \Omega(\lg
m/\lg S)$, where $m$ denotes the number of queries to the data
structure problem, $S$ the space usage in number of cells and $t_q$
the query time. Note however that this bound is insensitive to
polynomial changes in $S$ and cannot give super-constant lower bounds
for problems where the number of possible queries is just polynomial
in the input size (which is true for most natural problems).  This
barrier was overcome in the seminal work of P{\v a}tra{\c s}cu and
Thorup~\cite{patrascu10higher}, who extended the communication game of
Miltersen et al.~\cite{milt:asym} and obtained lower bounds of $t_q =
\Omega(\lg m/\lg(St_q/n))$, which peaks at $t_q = \Omega(\lg m/\lg \lg
m)$ for data structures using $n \poly(\lg m)$ space.

An alternative approach to static lower bounds was given by
Panigrahy et al.~\cite{pani:metric}. Their method is based on
sampling the cells of a data structure and showing that many queries
can be answered from a small set of cells if the query time is too
small (we note that similar ideas have been used for succinct data
structure lower bounds, see e.g.~\cite{gal:succinct}). The maximum lower bounds
that can be obtained from this technique are of the form $t_q =
\Omega(\lg m/\lg(S/n))$, see~\cite{larsen:staticloga}. For linear
space, this reaches $t_q = \Omega(\lg m)$, which remains the highest
static lower bound to date.

\paragraph{Dynamic Data Structures.}
The first technique for proving lower bounds on dynamic data
structures was the \emph{chronogram technique} of Fredman and
Saks~\cite{Fredman:chrono}. This technique gives lower bounds of the
form $t_q = \Omega(\lg n/\log(wt_u))$ and plays a fundamental role in
all later techniques for proving dynamic data structure lower bounds.
P{\v a}tra{\c s}cu and Demaine~\cite{Patrascu:loga} extended the
technique of Fredman and Saks with their \emph{information transfer}
technique.  This extension allowed for lower bounds of
$\max\{t_q,t_u\} = \Omega(\lg n)$.
Very recently, Larsen~\cite{larsen:dynamic_count} combined the
chronogram technique of Fredman and Saks with the cell sampling method
of Panigrahy et al. to obtain a lower bound of $t_q=\Omega((\lg
n/\lg(wt_u))^2)$, which remains the highest lower bound achieved so
far.

\paragraph{Conditional Lower Bounds.}
Examining all of the above results, we observe that no lower bound has
yet exceeded $\max\{t_u,t_q\}=\Omega((\lg n/\lg \lg n)^2)$ in the most
natural case of polynomially many queries, i.e. $m=\poly(n)$. In an
attempt to overcome this barrier, P{\v a}tra{\c
  s}cu~\cite{patrascu10mp-3sum} defined a dynamic version of a set
disjointness problem, named the \emph{multiphase problem}. We study
problems that are closely related to the multiphase problem, so we
summarize it here: 
\paragraph{The Multiphase Problem.}
This problem consists of three phases:
\begin{itemize}
  \item \textbf{Phase I:} In this phase, we receive $k$ sets
    $S_1,\dots,S_k$, all subset of a universe $[n]$. We are allowed to
    preprocess these sets into a data structure using time $O(\tau k
    n)$.
\item \textbf{Phase II:} We receive another set $T \subseteq [n]$ and
  have time $O(\tau n)$ to read and update cells of the data structure
  constructed in Phase I.
\item \textbf{Phase III:} We receive an index $i \in [k]$ and have
  time $O(\tau)$ to read cells of the data structure constructed
  during Phase I and II in order to determine whether $S_i \cap T =
  \emptyset$.
\end{itemize}

P{\v a}tra{\c s}cu conjectured that there exists constants $\mu > 1$
and $\eps>0$ such that any solution for the multiphase problem must
have $\tau = \Omega(n^\eps)$ when $k = n^\mu$, i.e. for the right
relationship between $n$ and $k$, any data structure must have either
polynomial preprocessing time, update time or query time. Furthermore,
he reduced the multiphase problem to a number of natural data
structure problems, including e.g. the following
problems. 
\begin{itemize}
\item\textbf{Reachability in Directed Graphs.} In a
preprocessing phase, we are given a directed graph with $n$ nodes and
$m$ edges. We are then to support inserting directed edges into the
graph. A query is finally specified by two nodes of the graph, $u$ and
$v$, and the goal is to determine whether there exists a directed path
from $u$ to $v$.

\item\textbf{Subgraph Connectivity.} In a preprocessing phase,
we are given an undirected graph with $n$ nodes and $m$ edges. We are
then to turn nodes on and off. A query is finally specified by two
nodes of the graph, $u$ and $v$, and the goal is to determine whether
there exists a path from $u$ to $v$ using only \emph{on} nodes.
\end{itemize}
We also mention the following problem, which was shown in~\cite{chan:mode} to
solve the multiphase problem.
\begin{itemize}
\item\textbf{Range Mode.} In a preprocessing phase, we are given an
  array $A[1 : n]=\{A[1],\dots,A[n]\}$ of integers and are to support
  value updates $A[i] \gets A[i] + x$.  Queries are specified by two indicies
  $i$ and $j$, and the goal is to find the most frequently occuring
  integer in the subarray $A[i:j]$.

\end{itemize}
These reductions imply polynomial lower bounds for the above problems,
if the multiphase problem has a polynomial lower bound. Thus it seems
fair to say that studying the multiphase problem is the most
promising direction for obtaining polynomial data structure lower
bounds.

\subsection{Non-Adaptivity}
\label{sec:nonadapt}
Given that we are generally clueless about how to prove polynomial
lower bounds in the cell probe model, it is natural to investigate
under which circumstances such bounds can be achieved. In this paper,
we study the performance of data structures that are non-adaptive. To
make the notion of non-adaptivity precise, we define it in the
following:

\begin{itemize}
\item \textbf{Non-Adaptive Query Algorithm.}
A cell probe data structure has a non-adaptive query algorithm, if the
cells it probes when answering a query depend only on the query, and
not on the contents of previously probed cells.

\item \textbf{Non-Adaptive Update Algorithm.}
Similarly, a cell probe data structure has a non-adaptive update
algorithm, if the cells it probes when processing an update depend
only on the update, and not on the contents of previously probed
cells.

\item \textbf{Memoryless Update Algorithm.}
In this paper, we also study a slighlty more restrictive type of
update algorithm.   A cell probe data structure has a memoryless update
algorithm, if the update algorithm is both non-adaptive, and
furthermore, the contents written to a cell during an update depend
only on the update and the current contents of the cell, i.e., they may
not depend on the contents of other cells probed during the update
operation.\footnote{A caveat on the semantics of updates: in this work, we
  assume updates specify how data \emph{changes} (e.g. updates are of
  the form $A[k] \leftarrow A[k] + \Delta$) as opposed to specifying
  new values for data (e.g. updates of the form $A[k] \leftarrow v$).
  The latter notion goes against the notion of non-adaptive updates,
  since to rewrite a cell, one must know how an update changes data.
  One solution is to assume that the data structure stores raw data
  directly, and to allow memoryless updates to depend on the current
  contents of a cell, the update, and the previous value of the
  update.  We view this issue as largely semantic, and do not discuss
  it further.}

\item \textbf{Linear Data Structures.}  
Finally, we study a sub-class of the data structures with a memoryless
update algorithm, which we refer to as linear data structures. These
data structures are defined for problems where the input can be
interpreted as an array $A$ of $n$ bits and an update operation can be
interpreted as flipping a bit of $A$ (from $0$ to $1$ or $1$ to
$0$). A linear data structure has non-adaptive query and update
algorithms. Furthermore, when processing an update, the contents of all
probed cells are simply flipped, and on a query, the data structure
returns the XOR of the bits stored in all the probed cells. Note that
these data structures use only a word size of $w=1$ bit, every cell
stores a linear combination over the bits of $A$ (mod 2) and a query
again computes a linear combination over the stored linear
combinations (mod 2).
\end{itemize}

While linear data structures might appear to be severly restrictive,
for many data structure problems (particularly in the area of range
searching), natural solutions are in fact linear. An example is the
well-studied \emph{prefix sum problem}, where the goal is to
dynamically maintain an array $A$ of bits under flip operations, and a
query asks for the XOR of elements in a prefix range $A[1\ldots k]$.
One-dimensional range trees are linear data structures that solve
prefix sum with update and query time $O(\log n)$. This is optimal
when memory cells store only single bits~\cite{Patrascu:loga}, even
for adaptive data structures. More elaborate problems in range
searching would be: Given a fixed set $P$ of $n$ points in
$d$-dimensional space, support deleting and re-inserting points of $P$
while answering queries of the form ``what is the parity of the number
of points inside a given query range?''. Here query ranges could be
axis-aligned rectangles, halfspaces, simplices etc. We note that all
the known data structures for range counting can easily be modified to
yield linear data structures when given a fixed set of points $P$, and
still, this setting seems to capture the hardness of range counting.

The main difference between non-adaptive and memoryless update
algorithms is that non-adaptive update algorithms may \emph{move} the
information about an update operation around the data structure, even
on later updates. As an example, consider a data structure with a
non-adaptive update algorithm and two possible updates, say updates
$u_1$ and $u_2$. Even if the data structure only probes the first
memory cell on update $u_1$, information about $u_1$ can be stored
many other places in the data structure.  Imagine the data structure
initially stores the value $0$ in the first memory cell. Whenever
update $u_1$ is performed, the data structure increments the contents
of the first memory cell by one. On update $u_2$, the data structure
copies the contents of the first memory cell to the second memory
cell. Clearly both operations are non-adaptive, and we observe that
whenever we have performed update $u_2$, the second memory cell stores
the number of times update $u_1$ has been performed, even though $u_1$
never probes the cell. For memoryless updates, information about an
update is only stored in cells that are actually probed when
processing the update operation.

Linear data structures are inherently memoryless.  However, some
features possible with memoryless updates are not available to linear
data structures.  For example, memoryless update algorithms can
support cells that maintain a count of the total number of updates
executed.  This is not possible with linear data structures, since the
contents of each cell is a \emph{fixed} linear combination of the data
being stored.

\subsection{Our Results}
\label{sec:our}
The main result of this paper, is to demonstrate that polynomial cell
probe lower bounds can be achieved when we restrict data structures to
be non-adaptive. 
In Section~\ref{sec:lower} we also prove lower bounds for data
structures where only the query algorithm is non-adaptive.  The
concrete data structure problem that we study in this setting is the
following indexing problem.
\paragraph{Indexing Problem.}  
In a preprocessing phase, we receive a set of $k$ binary strings
$S_1,\dots,S_k$, each of length $n$. We are then to support updates,
consisting of an index $j \in [n]$, which we think of as an index into
the strings $S_1,\dots,S_k$. A query is finally specified by an index
$i \in [k]$ and the goal is to return the $j$'th bit of $S_i$.
\begin{theorem}
\label{thm:nonquery}
Any cell probe data structure solving the indexing problem with a
non-adaptive query algorithm must either have $t_q = \Omega(n/w)$ or
$t_u = \Omega(k/w)$, regardless of the preprocessing time and space
usage.
\end{theorem}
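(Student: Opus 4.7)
The plan is to prove this by an encoding (compression) argument against a uniformly random input. Let $M \in \{0,1\}^{k \times n}$ denote the matrix whose $i$-th row is $S_i$, drawn uniformly at random so that $H(M) = kn$. By non-adaptivity of the query algorithm, for each $i \in [k]$ there is a fixed set $P_i$ of at most $t_q$ cell addresses that query $i$ probes, depending only on $i$. The goal is to show that if $t_q = o(n/w)$ and $t_u = o(k/w)$, one can losslessly encode $M$ in fewer than $kn$ bits, a contradiction.

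To produce the encoding, I would consider, for each $j \in [n]$, the operation sequence ``preprocess $M$; perform update $j$''---i.e., $n$ independent single-update ``trials'' starting from the same post-preprocessing state. The encoding has two pieces. The first records, in canonical order of cell address, the contents immediately after preprocessing of every cell in $P := \bigcup_{i \in [k]} P_i$; since $|P| \le k t_q$ and each cell holds $w$ bits, this piece uses at most $k t_q w$ bits, and the decoder can determine the cell addresses in $P$ from the fixed sets $P_i$. The second records, for each $j \in [n]$, the list of (address, final value) pairs for cells written during update $j$; since each update writes at most $t_u$ cells and each address and each value fits in $w$ bits, this piece uses $O(n t_u w)$ bits in total (plus negligible overhead for list lengths).

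To decode $M[i,j]$, the decoder computes $P_i$ from $i$ and, for each cell $c \in P_i$, reconstructs the contents of $c$ immediately after update $j$: it uses the last value written to $c$ during update $j$ if such a write appears in the second piece, and otherwise the post-preprocessing value of $c$ from the first piece. The decoder then simulates the deterministic query function of the data structure on these contents to obtain $M[i,j]$; correctness follows from the correctness of the data structure on the sequence ``preprocess $M$; update $j$; query $i$.'' Summing, the total encoding length is $O(k t_q w + n t_u w)$ bits, and comparing this to $H(M) = kn$ forces $t_q = \Omega(n/w)$ or $t_u = \Omega(k/w)$.

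The step I expect to be the main subtlety is handling the update algorithm, which is \emph{allowed} to be adaptive here: the set of cells written during update $j$ can depend on the contents of cells outside $P$, so the decoder cannot reproduce the write pattern by simulating the update from its partial view of memory. The encoding sidesteps this by explicitly recording the update's write-log rather than re-running the algorithm during decoding, at a cost of an extra $w$ bits per write for the address. The remaining bookkeeping---collapsing multiple writes to the same cell during a single update, handling randomized data structures by fixing their random tape, and verifying that no query depends on cells outside $P$---is routine.
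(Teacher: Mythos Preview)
Your proof is correct and follows the same encoding-argument skeleton as the paper: draw $M$ uniformly, use non-adaptivity of queries to fix $P=\bigcup_i P_i$, record the post-preprocessing contents of $P$ as the first piece, and record enough about the updates in the second piece to let the decoder reconstruct the contents of $P$ at query time. The difference lies entirely in how the second piece is organized. The paper performs the updates $j=1,\dots,n$ \emph{sequentially} on a single evolving state and records, for each step of each update, only the \emph{value} read; the decoder then re-simulates each adaptive update step-by-step (the first cell probed is determined by $j$, subsequent ones by values already supplied), keeping track of writes to cells in $P$ as it goes, and after finishing update $j$ it runs all queries to recover column $j$ of $M$. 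You instead consider $n$ \emph{independent} single-update trials from the fresh post-preprocessing state and record for each a write-log of (address, final value) pairs; your decoder never simulates updates at all.

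Both approaches yield an $O(kt_qw + nt_uw)$-bit encoding of an entropy-$kn$ source, and both handle adaptive updates. The paper's version is a bit tighter (no addresses, so $t_u w$ rather than $2t_u w$ bits per update) but requires the decoder to carry out the adaptive simulation. Your version pays the constant factor for addresses but is more modular: the decoder's task is a straightforward lookup, and correctness does not depend on reasoning about the sequential evolution of the state. Either way the conclusion is identical.
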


Examining this problem, one quickly observes that it is a special case
of the multiphase problem presented in Section~\ref{sec:previous}, thus by
setting the parameters in the reductions
of~\cite{patrascu10mp-3sum,chan:mode} correctly we obtain, amongst
others, the following lower bounds as an immediate corollary of our
lower bound for the indexing problem:

\begin{corollary}
  Any cell probe data structure that uses a non-adaptive query
  algorithm to solve (i) reachability in directed graphs or (ii)
  subgraph connectivity must either have $t_q = \Omega(n/w)$ or $t_u =
  \Omega(n/w)$.  Any cell probe data structure that solves range mode
  with a non-adaptive query algorithm must have $t_qt_u =
  \Omega(n/w^2)$.
\end{corollary}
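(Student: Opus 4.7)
The plan is to derive each bound from Theorem~\ref{thm:nonquery} by plugging it into known reductions. First, I would observe that the indexing problem is exactly the specialization of the multiphase problem in which $|T|=1$: identifying each binary string $S_i$ with its characteristic subset of $[n]$, a query ``return $S_i[j]$'' is equivalent to deciding whether $S_i\cap\{j\}=\emptyset$, and a sequence of $n$ singleton updates collectively plays the role of Phase~II while respecting its total time budget. Hence any non-adaptive query lower bound for indexing lifts to the multiphase problem with the same dependence on $n$ and $k$.

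Next, I would feed this into the existing reductions of P\v{a}tra\c{s}cu~\cite{patrascu10mp-3sum} from the multiphase problem to reachability in directed graphs and to subgraph connectivity. Both reductions are black box: each reduced query is answered by issuing a constant number of queries and updates to a data structure for the underlying multiphase instance, on inputs that depend only on the reduced query. The resulting graph has $N=\Theta(n+k)$ vertices. Choosing $k=n$ in the indexing instance so that $N=\Theta(n)$, Theorem~\ref{thm:nonquery} yields $t_q=\Omega(N/w)$ or $t_u=\Omega(N/w)$, which, after renaming $N$ as $n$, is the stated corollary. For range mode, I would invoke the reduction of~\cite{chan:mode} from the multiphase problem, again setting indexing parameters to balance the two alternatives of Theorem~\ref{thm:nonquery}. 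This reduction simulates a single multiphase update by a batch of range-mode updates whose size scales with $t_u$, so the update-side alternative of Theorem~\ref{thm:nonquery} gets amplified by a factor of $t_u$; this is precisely what turns the ``max'' bound into the product form $t_q t_u = \Omega(n/w^2)$.

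The main step that needs to be checked carefully is that each reduction preserves non-adaptivity of the query algorithm: the sequence of cells probed by the reduced query algorithm must depend only on the reduced query, not on any cell contents read during its execution. Concretely, I would verify that in each of~\cite{patrascu10mp-3sum,chan:mode} the choice of which multiphase query (or simulating update) to invoke is a fixed function of the reduced query alone, with no conditional branching on intermediate reads; any reduction that adaptively chose its next original-structure call based on cell contents would break the argument. This is essentially immediate from the syntactic structure of the published reductions, but is the one place where the argument is not purely mechanical. Granting this verification, the three stated bounds follow by routine parameter-setting.
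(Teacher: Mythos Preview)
Your proposal is correct and follows the same approach as the paper: observe that indexing is a special case of the multiphase problem and then invoke the black-box reductions of~\cite{patrascu10mp-3sum,chan:mode} with an appropriate choice of parameters (in particular $k=\Theta(n)$). The paper itself gives no further detail than this, so your write-up is in fact more explicit---and your observation that one must check the reductions preserve non-adaptivity of the query algorithm is a point the paper leaves implicit.
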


In Section~\ref{sec:lower}, we prove lower bounds for data
structures where the query algorithm is allowed to be adaptive, but
the update algorithm is memoryless. Again, we prove our lower bound
for a special case of the multiphase problem:
\paragraph{Set Disjointness Problem.}
In a preprocessing phase, we receive a subset $S$ of a universe
$[n]$. We are then to support inserting elements $x \in [n]$ into an
initially empty set $T$. Finally a query simply asks to return whether
$S \cap T = \emptyset$, i.e. the problem has just one query.
\begin{theorem}
\label{thm:memoryless}
Any cell probe data structure solving the set disjointness problem
with a memoryless update algorithm must have $t_q = \Omega(n/w)$,
regardless of the preprocessing time, space usage and update
time.
\end{theorem}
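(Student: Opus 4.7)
The plan is to reduce to the standard two-party communication complexity of set disjointness, using memorylessness to make the simulation of the data structure cheap. The crucial structural observation is that in a memoryless update algorithm each memory cell $c$ evolves completely independently of every other cell: if we write $f_{x,c} \colon \{0,1\}^w \to \{0,1\}^w$ for the rule by which update $x$ rewrites cell $c$ (a function only of $x$, $c$, and the old contents of $c$), then after performing the updates in any set $T$ in a canonical order, the contents of $c$ equal $g_c^T(\sigma_c(S))$, where $\sigma_c(S)$ denotes the contents of $c$ just after preprocessing $S$ and $g_c^T$ is the composition of those $f_{x,c}$ with $x \in T$ that probe $c$. Crucially, $g_c^T$ depends only on $T$ and the data structure description, not on $S$.

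I would then set up the following protocol for two-party set disjointness, where Alice holds $T \subseteq [n]$ and Bob holds $S \subseteq [n]$. Bob simulates preprocessing on $S$ locally (for free), producing $\sigma(S)$. He then runs the query algorithm; whenever it requests cell $c$, Bob sends the pair $(c, \sigma_c(S))$ to Alice, who computes $g_c^T$ from her input together with the data structure's description and returns $g_c^T(\sigma_c(S))$. By the observation above this is exactly the contents of cell $c$ in the real data structure, so Bob's simulation is faithful and he eventually outputs the correct disjointness bit.

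Each of the $t_q$ probes costs $O(w)$ bits in each direction (addresses fit in $w$ bits since cells are addressable by $w$-bit words, and cell contents are $w$ bits), so the total communication is $O(t_q \cdot w)$. Invoking the classical $\Omega(n)$ lower bound on two-party set disjointness then yields $t_q \cdot w = \Omega(n)$, hence $t_q = \Omega(n/w)$; the preprocessing time, space usage, and update time all disappear into local computation performed by Alice and Bob, matching the ``regardless of'' clauses in the theorem. The main obstacle is establishing the per-cell decomposition cleanly -- in particular, justifying the canonical ordering of updates, which follows from the order-independent semantics of set disjointness (any correct data structure must handle insertions in arbitrary order). Once the decomposition is in hand, the reduction itself is essentially immediate.
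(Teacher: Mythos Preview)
Your proof is correct and takes a genuinely different route from the paper. The paper argues by a direct encoding: the encoder draws a uniform $S\subseteq[n]$, preprocesses it, inserts $\bar S=[n]\setminus S$, runs the single query, and writes down the addresses together with the \emph{before} and \emph{after} contents of the $t_q$ probed cells; the decoder then ranges over all candidate sets $S'$, uses the memoryless property to simulate inserting $S'$ on just those cells, and recovers $\bar S$ as the largest set whose simulated contents match the recorded final contents (any matching $S'$ forces the query to answer ``disjoint'', hence $S'\cap S=\emptyset$). This gives a $3t_qw$-bit encoding of an object of entropy $n$. You instead turn the same per-cell decomposition into a two-party protocol for set disjointness of cost $O(t_qw)$ and invoke the classical $\Omega(n)$ communication lower bound as a black box. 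Your reduction is more modular and immediately extends to randomized data structures via the randomized disjointness bound; the paper's argument is self-contained, avoids the black box, and matches the encoding style used for Theorem~\ref{thm:nonquery}. Both rest on the identical structural insight you isolate, namely that under memoryless updates each cell's final contents is a function $g_c^T$ of its post-preprocessing contents determined solely by $T$.
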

Again, using the reductions of~\cite{patrascu10mp-3sum,chan:mode}, we
obtain the following lower bounds as a corollary of our lower bound
for the set disjointness problem:
\begin{corollary}
  Any cell probe data structure that uses a memoryless update
  algorithm to solve (i) reachability in directed graphs, (ii)
  subgraph connectivity, or (iii) range mode must have $t_q =
  \Omega(n/w)$.
\end{corollary}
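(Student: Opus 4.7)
My plan is to reduce to the classical $\Omega(n)$ communication lower bound for set disjointness, exploiting the key structural feature of memoryless updates. In a memoryless update algorithm, each insert $x$ has a fixed probe set $P(x)$ depending only on $x$, and the new content written to each cell $c \in P(x)$ depends only on $x$ and the prior content of $c$; denote this update function $f_x^c$. By composing these per-cell functions, after preprocessing $S$ and performing a sequence of inserts $T = (x_1,\dots,x_k)$, the current content of any cell $c$ equals $f_{x_{i_m}}^c \circ \cdots \circ f_{x_{i_1}}^c(D_S[c])$, where $x_{i_1},\dots,x_{i_m}$ is the sub-sequence of inserts with $c \in P(x_{i_j})$ and $D_S[c]$ is the post-preprocessing content. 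In particular, knowing just the starting value $D_S[c]$ together with $T$ and the algorithm, one can locally reconstruct $c$'s state at any point in time.

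I would use this to set up the following two-party protocol for set disjointness. Alice holds $S \subseteq [n]$, Bob holds $T \subseteq [n]$, and they wish to decide whether $S \cap T = \emptyset$. Alice runs the preprocessing locally to obtain $D_S$; Bob wishes to simulate the inserts of $T$ followed by the single query. At each step of the adaptive query, Bob has already reconstructed the current content of every cell probed so far, and therefore knows which cell $c$ the query algorithm wishes to probe next. He sends its address to Alice (at most $w$ bits, since addresses lie in $[2^w]$ by the model), and Alice replies with $D_S[c]$ (a further $w$ bits); Bob then applies the appropriate chain of $f_x^c$'s locally to recover $c$'s current content. The query finishes in $O(t_q)$ rounds using $O(t_q \cdot w)$ bits of communication in total. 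Crucially, the inserts themselves generate no communication, because a cell written by some $x \in T$ but never probed by the query is never needed.

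Combining this upper bound with the $\Omega(n)$ two-way randomized communication lower bound for set disjointness on universe $[n]$ (Kalyanasundaram--Schnitger / Razborov) yields $t_q \cdot w = \Omega(n)$, i.e., $t_q = \Omega(n/w)$, independent of preprocessing time, space, and update time, as required.

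The main point to verify carefully is that Bob truly needs the content of no cell beyond those the query probes --- and this is exactly where memorylessness is essential. A non-memoryless (but still non-adaptive) update algorithm could freely move information between cells, so a cell read by the query might depend on $D_S[c']$ for many cells $c'$ the query never probes; Alice would then have to transmit much more, and the argument would collapse. This is consistent with the paper's earlier emphasis that non-adaptivity alone is strictly weaker than memorylessness.
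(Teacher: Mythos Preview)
Your argument is a correct and clean proof of the underlying set disjointness lower bound (what the paper states as Theorem~\ref{thm:memoryless}), but it is not yet a proof of the stated corollary: you never touch reachability, subgraph connectivity, or range mode. The paper obtains the corollary by composing Theorem~\ref{thm:memoryless} with the multiphase reductions of P{\v a}tra{\c s}cu and of Chan et al.; you need to invoke those reductions as well, and in particular to observe that they preserve memorylessness (each insert $x\in[n]$ is mapped to a fixed finite sequence of target-problem updates, and a composition of memoryless updates is again memoryless, so the induced set-disjointness data structure is memoryless with comparable $t_q$). You should also track how the universe size $n$ relates to the instance size of the target problem under each reduction so that the $\Omega(n/w)$ bound transfers with the correct parameters.

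For the core lower bound itself, your route is genuinely different from the paper's. The paper gives a one-party encoding argument: the encoder preprocesses $S$, inserts $\bar S$, and writes down the addresses plus the before/after contents of the $t_q$ cells the query reads; the decoder then brute-forces over all $S'$, replays inserts cell-by-cell using memorylessness, and identifies $\bar S$ as the unique maximal set whose replay matches the recorded after-state. You instead reduce to two-party set disjointness communication complexity, letting Bob simulate the adaptive query by requesting from Alice only the $t_q$ post-preprocessing cell values $D_S[c]$ and reconstructing the post-insert contents locally via the per-cell composition $f_{x_{i_m}}^c\circ\cdots\circ f_{x_{i_1}}^c$. Both proofs hinge on exactly the same structural fact---under memoryless updates, the final content of a cell is a function only of its initial content and the subsequence of updates that probe it---so they are morally close; your version has the advantage of plugging directly into the randomized Kalyanasundaram--Schnitger/Razborov bound (hence extending to randomized data structures without change), while the paper's encoding proof is self-contained and avoids citing communication lower bounds.
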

Finally, in Section~\ref{sec:circuit}, we show a strong connection
between nonadaptive data structures and the wire complexity of depth-2
circuits.  In these circuits, gates have \emph{unbounded} fan-in and
fan-out and compute \emph{arbitrary} functions.  Thus, trivial bounds
on the number of gates exist.  Instead, the size of a circuit $s(C)$ is
defined to be the number of wires.

Proving lower bounds on the size of circuits computing explicit
operators $F:\{0,1\}^n \rightarrow \{0,1\}^m$ has been studied in
several works.  In particular, Valiant~\cite{Valiant77} showed that an
$\omega(n^2/(\log\log n))$ bound for circuits computing $F$ implies
that $F$ cannot be computed by log-depth, linear size, bounded fan-in
circuits.  Currently, the best bounds known for an explicit operator
are $\Omega(n^{3/2})$.  Cherukhin~\cite{Cherukhin05} gave such a bound
for circuits computing cyclic convolutions.  Jukna~\cite{Jukna10} gave
a similar lower bound for circuits computing matrix multiplication,
and developed a general technique for proving such lower bounds,
formalizing the intuition in~\cite{Cherukhin05}.

First, we show how to use simple encoding arguments common to data
structure lower bounds to achieve circuit lower bounds, using matrix
multiplication as an example.  Our bound matches the result
from~\cite{Jukna10}, but yields a simpler argument.  We discuss
Jukna's technique in more detail in Section~\ref{sec:circuit}.

\begin{theorem}[\cite{Jukna10}]\label{thm:mm-encoding-intro}
  Any circuit computing matrix multiplication has size at least
  $n^{3/2}$.
\end{theorem}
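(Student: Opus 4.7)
The plan is to prove Theorem~\ref{thm:mm-encoding-intro} via a compression (encoding) argument in the spirit of cell probe data structure lower bounds. Fix a suitably ``hard'' matrix $A$ (whose existence is guaranteed by a counting argument, or given by an explicit rigid construction) and interpret matrix multiplication as the linear map $x \mapsto Ax$ applied to a uniformly random input $x \in \bits^n$. We view the middle layer of any depth-2 circuit $C$ as the ``cells'' of a static data structure, and each output bit as a query that probes the cells in its middle-layer neighborhood. Writing $a_i$ and $b_i$ for the fan-in and fan-out of middle gate $g_i$, we have $\sum_i (a_i + b_i) = s$, and the target is to derive $s = \Omega(n^{3/2})$.

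The encoder, given the circuit $C$ and the input $x$, runs $C$ on $x$ and transmits a short description of $x$ consisting of the one-bit values $g_i(x)$ for $i$ in a carefully chosen subset $S$ of middle gates, together with the output bits $y_j$ for those $j$ whose middle-layer neighborhood $N_j$ is not contained in $S$. The decoder recomputes every remaining output by simulating the top layer of $C$ on the revealed middle-gate values, obtains the full vector $y = Ax$, and inverts $A$ to recover $x$. For the description to beat the $n$-bit entropy of $x$, we need $|S| + |\{j : N_j \not\subseteq S\}|$ to lie strictly below $n$. A natural choice is to include in $S$ every middle gate whose fan-out exceeds a threshold $\tau$, together with an independent random sample of the low-fan-out gates at rate $p$. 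Tuning $\tau$ and $p$ against the global constraints $\sum_i b_i \leq s$ and $\sum_j |N_j| \leq s$ to push the expected description length below $n$ yields $s = \Omega(n^{3/2})$.

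The step I expect to be the main obstacle is the final information-theoretic accounting: one must show that the outputs the decoder recovers carry $\Omega(n)$ \emph{independent} bits of information about $x$. Recovering many $y_j$'s is insufficient unless the corresponding rows of $A$ span an $\Omega(n)$-dimensional subspace, which is exactly what a rigidity-style hypothesis on $A$ supplies and is where the choice of ``hard'' matrix is essential. The remaining ingredients — sampling, union bounds over fan-in and fan-out contributions, and simulation of the top layer of $C$ — are routine; the real work is isolating the right combinatorial property of $A$ and verifying that the chosen sampling rule realizes a genuine savings over the trivial $n$-bit description, thereby forcing $s \geq n^{3/2}$.
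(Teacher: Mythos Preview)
Your proposal misidentifies the operator being computed. In the theorem, \emph{both} matrices $A$ and $B$ are inputs to the circuit: there are $2n$ input bits (the entries of two $\sqrt{n}\times\sqrt{n}$ matrices) and $n$ output bits (the entries of $A\cdot B$). You instead freeze $A$ to a ``hard'' matrix and try to lower-bound the depth-2 wire complexity of the \emph{linear} operator $x\mapsto Ax$. That is a strictly harder target: as the paper itself notes, the best known lower bound for any explicit linear operator is $\Theta(n(\lg n/\lg\lg n)^2)$, and obtaining $\Omega(n^{1+\eps})$ for an explicit linear map would be a breakthrough well beyond what this theorem requires. Your invocation of rigidity does not escape this; sufficiently strong explicit rigidity is precisely the open problem you would be assuming away, and the threshold/sampling scheme you sketch does not produce the required savings without it.

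The paper's proof works exactly because the operator is bilinear, not linear. It partitions the $B$-inputs and the outputs into $\sqrt{n}$ groups $I_\ell,J_\ell$ indexed by a column $\ell$, and shows for each $\ell$ that the wires leaving $I_\ell$ plus the wires entering $J_\ell$ number at least $n$; summing over the $\sqrt{n}$ disjoint groups gives $n^{3/2}$. The per-column claim is an encoding argument, but the encoder's trick is to load the \emph{random} matrix $M$ into the $A$-inputs and then, for each $k$, set $B$ to the single-entry matrix $e_k e_\ell^\top$, so that the $\ell$-th output column equals the $k$-th column of $M$. The message records (i) the values of the middle gates feeding $J_\ell$ when $B=0$, and (ii) for each $k$, the values of the middle gates adjacent to the input $B[k,\ell]$; from these the decoder can reconstruct every column of $M$. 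The crucial point you are missing is that varying $B$ across many settings lets a message whose length is bounded by a \emph{local} wire count recover a \emph{global} object of entropy $n$; no rigidity or sampling is needed.
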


Depth-2 circuits computing explicit \emph{linear} operators are of
particular interest.  Currently, the best lower bound for an explicit
linear operator is the recent $\Theta(n (\log n/\log \log n)^2)$ bound
of G\'{a}l et al.~\cite{Gal12} for circuits that compute error
correcting codes.  Another interesting question is whether general
circuits are more powerful than \emph{linear} circuits for computing
linear operators.  Linear circuits use only XOR gates; i.e., each gate
outputs a linear combination in $\GF(2)$ over its inputs.

We show a generic connection between linear data structures and linear
circuits.  Define a problem $\cP$ as a mapping $F_\cP =
(f_1,\dots,f_m) : \{0,1\}^n \to \{0,1\}^m$, where each $f_j :
\{0,1\}^n \to \{0,1\}$. For linear data structures, think of the
domain $\{0,1\}^n$ as the input array $A$ with $n$ bits, and view each
$f_j$ as a query, where $f_j(A)$ is the answer to the query $f_j$ on
the input $A$. A linear data structure hence solves $\cP$, if after
any sequence of updates to $A$, it holds for all $1 \leq j \leq m$
that answering the query $f_j$ returns the value $f_j(A)$.

\begin{lemma}\label{lem:circuit-intro}
  If there is a linear data structure for a problem $\cP$ with query
  time of $t_q$ and update time $t_u$, then there exists a depth-2
  linear circuit $C$ computing $F_\cP$ with size $s(C) \leq n t_u + mt_q$.

  If there is a depth-2 linear circuit $C$ that computes $F_\cP$, then there
  is a linear data structure for $\cP$ with \emph{average} query time
  at most $s(C)/m$ and average update time at most $s(C)/n$.
\end{lemma}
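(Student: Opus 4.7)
Both directions rest on the same correspondence: the probed cells of a linear data structure play the role of the middle layer of a depth-2 linear circuit, so the proof is essentially a dictionary between the two objects.

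\textbf{From data structure to circuit.} Because the data structure is linear and non-adaptive, the content of every cell $c$ is, at every point in time, a fixed linear combination $\bigoplus_{i \in I_c} A[i]$, where $I_c \subseteq [n]$ is the (fixed) set of indices $i$ whose update algorithm probes $c$; this is forced by initializing $A = 0^n$ and tracking which updates toggle $c$. Similarly, the non-adaptive query algorithm for $f_j$ probes a fixed set $Q_j$ of cells and returns the XOR of their contents. I build a depth-2 linear circuit whose middle layer contains one XOR gate $g_c$ per cell $c$ that is probed by some query, with inputs $\{A[i] : i \in I_c\}$, and whose output layer has one XOR gate per query $f_j$ with inputs $\{g_c : c \in Q_j\}$. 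Correctness is immediate, since the output gate's value equals precisely what the data structure would report. The number of bottom-layer wires is $\sum_c |I_c| = \sum_{i=1}^{n} |\{c : i \in I_c\}| \leq n t_u$, and the number of top-layer wires is $\sum_{j=1}^{m} |Q_j| \leq m t_q$, giving $s(C) \leq n t_u + m t_q$.

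\textbf{From circuit to data structure.} Given a depth-2 linear circuit $C$ with middle gates $g_1, \dots, g_s$, I create a linear data structure with one cell per middle gate, where cell $c$ always holds the value $g_c(A)$ (initial values computed during preprocessing from the starting $A$). The update algorithm for flipping bit $A[i]$ probes exactly the cells corresponding to middle gates that take $A[i]$ as an input and flips each, which keeps every cell consistent with its gate and is manifestly non-adaptive and linear. The query algorithm for $f_j$ probes the cells corresponding to middle gates feeding output $f_j$ and returns their XOR, which equals $f_j(A)$ by definition of $C$. Summing update times over all $n$ possible bit flips yields exactly the number of bottom-layer wires, and summing query times over all $m$ queries yields exactly the number of top-layer wires; each of these is bounded by $s(C)$, so dividing by $n$ and $m$ respectively gives the claimed average times.

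\textbf{Main obstacle.} There is no serious technical difficulty: the lemma is a structural equivalence. The one point I would be careful to spell out is the claim in the forward direction that a linear data structure has no freedom in what each cell stores---the linear combination in cell $c$ is forced to be $\bigoplus_{i \in I_c} A[i]$ with $I_c$ exactly the set of updates probing $c$---so that the middle-gate construction is well-posed and the wire counts truly match $n t_u$ and $m t_q$ rather than being tallied over a potentially larger set of cell states.
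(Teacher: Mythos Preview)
Your proposal is correct and follows essentially the same approach as the paper: in both directions the dictionary is \emph{cells $\leftrightarrow$ middle gates}, \emph{update probes $\leftrightarrow$ bottom wires}, \emph{query probes $\leftrightarrow$ top wires}, and the size bounds follow by summing. You are in fact a bit more explicit than the paper---spelling out the double-counting $\sum_c |I_c| = \sum_i |\{c : i \in I_c\}|$ and justifying why the stored linear combination in each cell is forced---but the argument is the same structural equivalence.
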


Lemma~\ref{lem:circuit-intro} thus gives a new way to attack circuit
lower bounds.  We believe the connection between non-adaptive data
structures and depth-2 circuits has the potential to yield strong
insight to this problem, and that several linear operators conjectured
to have strong data structure lower bounds are good candidates for
hard circuit problems (for linear or general circuits).

Apart from being interesting lower bounds in their own right, we
believe our results shed much light on the inherent difficulties of
proving polynomial lower bounds in the cell probe model. In particular
the \emph{movement} of data when performing updates (see the
discussion in Section~\ref{sec:nonadapt}) appears to be a major
obstacle.  We conclude in Section~\ref{sec:concl} with a discussion of
our results and potential directions for future research.

\section{Lower Bounds} \label{sec:lower}
%
In this section, we first prove lower bounds for data structures where only
the query algorithm is assumed non-adaptive. The problem we study is
the indexing problem defined in Section~\ref{sec:our}.

\begin{theorem}
[Restatement of Theorem~\ref{thm:nonquery}]
Any cell probe data structure solving the indexing problem with a
non-adaptive query algorithm must either have $t_q = \Omega(n/w)$ or
$t_u = \Omega(k/w)$, regardless of the preprocessing time and space
usage. Here $t_q$ denotes the query time, $t_u$ the update time and $w$
the cell size in bits.
\end{theorem}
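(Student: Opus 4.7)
The plan is an encoding argument against uniformly random input. Take $S = (S_1, \dots, S_k) \in \{0,1\}^{kn}$ uniformly random and assume toward contradiction both $t_q < n/(cw)$ and $t_u < k/(cw)$ for a sufficiently large constant $c$. I would preprocess $S$ and then perform the update sequence $j = 1, 2, \dots, n$ in order, obtaining states $D_1, \dots, D_n$ with the property that querying any $i \in [k]$ after $D_j$ returns $S_i[j]$; together, these queries reveal every bit of $S$.

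The structural observation driving the argument, and the single place non-adaptivity is used, is that the probed set $P(i)$ is a function of $i$ only --- independent of memory contents, of update history, and of $S$. Thus the union $C^* = \bigcup_{i \in [k]} P(i)$, of size at most $k t_q$, is fixed a priori and known to the decoder from the algorithm's description alone. Moreover, knowledge of $(D_j)|_{C^*}$ suffices to simulate each query $i$ and read off the bit $S_i[j]$.

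I would then encode $S$ in two parts: (a) the contents of $(D_1)|_{C^*}$ written verbatim, using at most $k t_q w$ bits; and (b) for each subsequent update $j = 2, \dots, n$, a list of the cells in $W_j \cap C^*$ (where $W_j$ is the set of cells written by update $j$) together with their new contents, where each cell is addressed by its index inside $C^*$ in at most $\lceil \log(k t_q) \rceil$ bits and each content takes $w$ bits. Since $|W_j| \leq t_u$ this costs $O(n t_u (w + \log(k t_q)))$ bits in total. The decoder replays these changes on $C^*$ to reconstruct every $(D_j)|_{C^*}$ and then simulates all $nk$ queries to recover $S$. An entropy lower bound then forces
\[
k t_q w \;+\; O\!\left(n t_u \bigl(w + \log(k t_q)\bigr)\right) \;\geq\; kn,
\]
and under the standard cell-probe convention $w = \Omega(\log(kn))$ (so $\log(k t_q) = O(w)$) this rearranges to $k t_q + n t_u = \Omega(kn/w)$, contradicting both hypotheses simultaneously.

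The main obstacle I expect is keeping the cell-identifier cost in part (b) independent of the (unbounded) space usage; this is exactly what naming cells inside $C^*$ accomplishes, and it is crucial that non-adaptivity of queries makes $C^*$ data-independent and known to the decoder. A secondary point that would need to be made explicit is that update adaptivity is harmless here: since the encoding records only what the updater \emph{writes} into $C^*$, not what it \emph{reads}, the decoder never needs to re-simulate the update algorithm's internal probes, which is what lets the bound apply irrespective of preprocessing time, space, and update adaptivity.
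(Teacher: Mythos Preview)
Your argument is correct and follows the same overall encoding strategy as the paper: fix the non-adaptive query footprint $C^*=\bigcup_i P(i)$, encode enough about the evolving contents of $C^*$ across the update sequence $j=1,\dots,n$ to let the decoder answer every query $(i,j)$, and compare the message length to $H(S)=kn$.

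The one tactical difference is in how the decoder learns the state of $C^*$ after each update. You record, for each update, the \emph{writes} landing in $C^*$ (cell index within $C^*$ plus new contents), and the decoder applies these diffs without ever running the update algorithm. The paper instead records, for each update, the contents of every cell the update \emph{reads}; the decoder then re-simulates the (possibly adaptive) update algorithm step by step, thereby learning both the addresses touched and the values written, and in particular tracking changes to $C^*$. The paper's version avoids your $\lceil\log|C^*|\rceil$ address overhead per written cell, so it gets the clean bound $k t_q w + n t_u w \ge kn$ with no appeal to $w=\Omega(\log(kn))$. Your version has the complementary virtue of making it transparent that update adaptivity is irrelevant, since the decoder never touches the update algorithm at all. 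Either way the conclusion is the same; note also that your address overhead can be justified directly from the paper's model assumption that $w$ suffices to address any cell (so $\log|C^*|\le\log S\le w$), without invoking a separate convention on $w$.
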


We prove this using an encoding argument. Specifically, consider
a game between an encoder and a decoder. The encoder receives as input
$k$ binary string $S_1,\dots,S_k$, each of length $n$ and must from
this send a message to the decoder. From the message alone, the
decoder must uniquely recover all the strings $S_1,\dots,S_k$. If the
strings $S_1,\dots,S_k$ are drawn from a distribution, then the
expected length of the message must be at least $H(S_1 \cdots S_k)$,
or we have reached a contradiction.  Here $H(\cdot)$ denotes Shannon
entropy.

The idea in our proof is to assume for contradiction that a data
structure for the indexing problem exists with a non-adaptive query
algorithm that simultaneously has $t_q=o(n/w)$ and $t_u=o(k/w)$. Using
this data structure as a black box, we construct a message that is
shorter than $H(S_1 \cdots S_k)$, but at the same time, the decoder
can recover $S_1,\dots,S_k$ from the message, i.e. we have reached the
contradiction.  We let the $k$ strings $S_1,\dots,S_k$ given as input
to the encoder be uniform random bit strings of length $n$. Clearly
$H(S_1 \cdots S_k) = kn$.
\paragraph{Encoding Procedure.}
When given the strings $S_1,\dots,S_k$ as input, the encoder first
runs the preprocessing algorithm of the claimed data structure on
$S_1,\dots,S_k$. He then examines every possible query index $i \in
[k]$, and for each $i$, collects the set of addresses of the cells
probed on query $i$. Since the query algorithm is non-adaptive, these
sets of addresses are independent of $S_1,\dots,S_k$ and any updates
we might perform on the data structure. Letting $C$ denote the set
containing all these addresses for all $i$, the encoder starts by
writing down the concatenation of the contents of all cells with an
address in $C$. This constitutes the first part of the message.

The encoder now runs through every possible update $j\in [n]$. For
each $j$, he runs the update algorithm as if update $j$ was
performed on the data structure. While running update $j$, the decoder
appends the contents of the probed cells (as they are when the update
reads the cells, not after potential changes) to the constructed
message. After processing all $j$'s, the encoder finally sends
the constructed message to the decoder. This completes the encoding
procedure.

\paragraph{Decoding Procedure.}
The decoder receives as input the message consisting first of the
contents of all cells with an address in $C$ after preprocessing
$S_1,\dots,S_k$. Since the query algorithm is non-adaptive, the
decoder knows the addresses of all these cells simply by examining the
query algorithm of the claimed data structure. The decoder will now
run the update algorithm of every $j \in [n]$. While doing this, he
maintains the contents of all cells in $C$ and all cells probed during
the updates. Specifically, the decoder does the following:

For each $j=1,\dots,n$ in turn, he starts to run the update algorithm
for $j$. Observe that the contents of each probed cell (before
potential changes) can be recovered from the message (the contents
appear one after another in the message). This allows the decoder to
completely simulate the update algorithm for each $j=1,\dots,n$. Note
furthermore that for each cell that is probed during these updates,
the address can also be recovered simply by examining the update
algorithm. In this way, the decoder always knows the contents of all
cells in $C$ and all cells probed by the update algorithm as they
would have been after preprocessing $S_1,\dots,S_k$ and performing the
updates after this preprocessing. While processing the updates
$j=1,\dots,n$, the decoder also executes a number of queries: After
having completely processed an update $j$, the decoder runs the query
algorithm for every $i \in [k]$. Note that the decoder knows the
contents of all the probed cells as if the preprocessing on
$S_1,\dots,S_k$ had been performed, followed by updates
$j'=1,\dots,j$. This implies that the simulation of the query
algorithm for each $i \in [k]$ terminates precisely with the answer
being the $j$'th bit of $S_i$. It follows immediately that the decoder
can recover every bit of every $S_i$ from the message.

\paragraph{Analysis.}
What remains is to analyze the size of the message. Since by
assumption, the query time is $t_q=o(n/w)$, the first part of the
message has $t_qkw=o(kn)$ bits. Similarly, we assumed $t_u=o(k/w)$,
thus the second part of the message has $t_unw=o(kn)$ bits. Thus the
entire message has $o(kn)$ bits. Since $H(S_1 \cdots S_k) = kn$, we
have reached our contradiction. This completes the proof of
Theorem~\ref{thm:nonquery}.\\\\
Next, we prove lower bounds for data structures where only
the update algorithm is assumed to be memoryless, that is, we allow
the query algorithm to be adaptive. In this setting, we study the
set disjointness problem defined in Section~\ref{sec:our}:

\begin{theorem}
[Restatement of Theorem ~\ref{thm:memoryless}]
Any cell probe data structure solving the set disjointness problem
with a memoryless update algorithm must have $t_q = \Omega(n/w)$,
regardless of the preprocessing time, space usage and update
time. Here $t_q$ denotes the query time and $w$ the cell size in bits.
\end{theorem}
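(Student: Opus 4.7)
The plan is to adapt the encoding argument of Theorem~\ref{thm:nonquery}, with the memoryless update property playing the role that non-adaptive queries played there. Let $S$ be drawn uniformly from $\{0,1\}^n$, so $H(S)=n$, and assume for contradiction that a memoryless-update data structure solves set disjointness with $t_q = o(n/w)$. The goal is to encode $S$ in strictly fewer than $n$ bits and reach a contradiction.

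The encoder preprocesses $S$ and then, for each $y \in [n]$, conceptually forks the data structure and simulates the sequence ``insert $y$; query''. Let $Q_y$ denote the set of cells probed by the $y$-th simulated query (so $|Q_y| \le t_q$), and set $D := \bigcup_{y=1}^{n} Q_y$. The encoder transmits, for every $c \in D$, the address of $c$ and the content of $c$ immediately after the preprocessing phase. The decoder resimulates each (insert $y$; query) sequence. Memorylessness is crucial here in two ways. First, the set of cells probed by insert $y$ is fixed (non-adaptive), and each probed cell's new content is determined only by $y$ and the cell's previous content; so from the post-preprocessing contents in the message, the decoder can compute the post-insert content of any probed cell without reference to other cells. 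Second, when the adaptive query probes a cell $c$, by construction $c \in Q_y \subseteq D$, and the decoder therefore has the content it needs. Each simulated query outputs the bit indicating whether $y \in S$, and collecting these bits across $y$ recovers $S$.

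The message has length $|D| \cdot (w + O(\log M))$, where $M \le 2^w$ is the number of memory cells, so it is $O(|D| \cdot w)$ bits. Shannon then forces $|D| = \Omega(n/w)$. The remaining and hardest step is to convert this into the claimed lower bound $t_q = \Omega(n/w)$: the trivial estimate $|D| \le n \cdot t_q$ only yields $t_q = \Omega(1/w)$, missing the target by a factor of $n$. Closing this gap is the crux of the proof, and likely requires either a combinatorial argument showing heavy overlap among the $Q_y$'s (so $|D|$ is close to $t_q$ rather than $n \cdot t_q$), or a refinement of the encoding that isolates a single cleverly chosen query whose probed cells alone must encode $S$ (giving $t_q \cdot w \ge n$ directly). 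The encoding set-up above is essentially forced by the memoryless structure; this sharpening is where I expect the real technical content of the argument to lie.
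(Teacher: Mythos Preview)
Your encoding is sound as far as it goes, and you have correctly diagnosed that it falls short by a factor of $n$: the union $D=\bigcup_y Q_y$ can genuinely have size $\Theta(n t_q)$, so no combinatorial overlap argument will rescue it. The second route you float---isolating a \emph{single} query whose probed cells already determine $S$---is exactly what the paper does, but you are missing the concrete mechanism.

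The missing idea is this. The encoder inserts the \emph{complement} $\bar S = [n]\setminus S$ as the set $T$ and then runs the (single) query; let $C$ be the $\le t_q$ cells it probes. The message consists of the addresses of $C$ together with the contents of those cells \emph{both before and after} the insertion of $\bar S$---so $O(t_q w)$ bits total. The decoder does not try to replay the adaptive query at all. Instead she brute-forces over every $S'\subseteq [n]$: starting from the recorded pre-insertion contents of $C$, she simulates inserting the elements of $S'$, updating only the cells of $C$ (memorylessness is exactly what makes this local simulation faithful, since each cell's new content depends only on the update and that cell's old content). Whenever the resulting contents of $C$ match the recorded post-insertion contents, the query would output \emph{disjoint}, so $S\cap S'=\emptyset$. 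Since $\bar S$ is the unique \emph{maximal} set disjoint from $S$, the decoder recovers $S$ as the complement of the largest $S'$ that matches. Hence $3t_q w \ge H(S)=n$, i.e.\ $t_q=\Omega(n/w)$.

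The two points you did not anticipate are (i) inserting all of $\bar S$ rather than singletons, which collapses $n$ queries to one, and (ii) letting the decoder test all $2^n$ candidate sets against the recorded cell contents rather than simulating the adaptive query directly---the query output is determined by $C$, so matching $C$ certifies disjointness, and maximality of $\bar S$ pins down the answer uniquely.
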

%

Again, we prove this using an encoding argument. In this encoding
proof, we let the input of the encoder be a uniform random set $S
\subseteq [n]$. Clearly $H(S)=n$ bits. We now assume for contradiction
that there exists a data structure for the set disjointness problem
with a memoryless update algorithm and at the same it has query time
$t_q = o(n/w)$. The encoder uses this data structure to send a message
encoding $S$ in less than $n$ bits, i.e. a contradiction.

\paragraph{Encoding Procedure.}
When the encoder receives $S$, he runs the preprocessing algorithm of
the claimed data strucutre. Then, he computes $\bar{S} = [n]\setminus
S$ and inserts $\bar{S}$ into the data structure as the set $T$.
Finally, the encoder runs the query algorithm and notes the set of
cells $C$ probed.  Note that by the choice of $\bar{S}$, the query
algorithm will output \emph{disjoint}, and furthermore, $\bar{S}$ is
the largest possible set that will result in a \emph{disjoint}
answer.

The encoding consists of three parts\footnote{In fact, it is possible for the decoder to recover
  $C$ from the second two parts of the encoding, so the first part is
  unnecessary.  However, this does not materially affect our lower
  bound, so we omit the details.}: (i) the addresses of the cells in $C$, (ii)
the contents of the cells in $C$ after preprocessing but before inserting
$\bar{S}$, and (iii) the contents of the cells in $C$ after inserting
$\bar{S}$.

\paragraph{Decoding Procedure.}
The decoder iterates over all sets $S' \subseteq [n]$.  Each time, the
decoder initializes the contents of cells in $C$ to match the second
part of the encoder's message.  Then, he inserts each element of $S'$
into the data structure, changing the contents of any cell in $C$
where appropriate.  When a cell outside of $C$ is to be changed, the
decoder does nothing.  Since the update algorithm is memoryless, this
procedure ends with all cells in $C$ having the same contents as they
would have had after preprocessing $S$ and inserting elements of $S'$.
Moreover, if the contents match the contents written down in the third
part of the encoding, then it must be that $S$ and $S'$ are disjoint
(we know that the query answers \emph{disjoint} when the contents of
$C$ are like that). When $S' = \bar{S}$, the contents of $C$ will
match the last part of the encoding, and it is trivially the largest
set to do so.  Thus, the decoder selects the largest set $S^*$ whose
updates to $C$ match the contents written down in the third part of
the encoding.  In this way, the decoder recovers $S = [n] \setminus
S^*$.

\paragraph{Analysis.}
Finally, we analyze the size of the encoding.  Since we assumed $t_q =
o(n/w)$, the encoding has size $3t_qw = o(n)$ bits.  But
$H(S) = n$, thus we have reached a contradiction.

\section{Circuits and Non-Adaptive Data Structures}\label{sec:circuit}
In this section, we demonstrate a strong connection between
non-adaptive data structures and the wire complexity of depth-2
circuits.
A depth-2 circuit computing $F = (f_1,\ldots, f_m) : \bits^n
\rightarrow \bits^m$ is a directed graph with three layers of
vertices.  The first layer consists of $n$ input nodes, labeled
$x_1,\ldots, x_n \in \bits$.  Vertices in the second layer are
interior gates and output boolean values.  The last layer consists of $m$
output gates, labeled $z_1,\ldots, z_m \in \bits$.  There are edges
between input nodes and interior gates, and between interior gates and
output gates.  Each gate computes an \emph{arbitrary} function of its
inputs.  Since non-input nodes compute arbitrary functions, $f$ can be
trivially computed using $m$ gates.  Instead, we define the
\emph{size} $s(C)$ of a depth-2 ciruit $C$ as the total number of
wires in it; i.e., the number of edges in the graph.

First, we show how to use the encoding technique common to data
structure lower bounds to achieve size bounds for depth-2 circuits.
As a proof of concept, we prove such a lower bound for matrix
multiplication.  We say that a circuit computes matrix multiplication
if there are $n = 2m$ inputs, each corresponding to an entry in one of
two $\sqrt{n} \times \sqrt{n}$ binary matrices $A$ and $B$, and each output
gate computes an entry in the product $A\cdot B$.  Arithmetic is in $\GF(2)$.

Jukna~\cite{Jukna10} considered depth-$2$ circuits and gave an
$n^{3/2}$ lower bound for circuits computing boolean matrix multiplication.
At a high level, his proof proceeds in the following fashion.
\begin{enumerate}
\item Partition input nodes into sets $I_1,\ldots, I_t$ and output gates
  into sets $J_1,\ldots, J_t$.
\item Prove that for each $1 \leq \ell \leq t$, the number of wires
  leaving inputs from $I_\ell$ plus the number of wires entering
  outputs in $J_\ell$ must be large.\label{foo:part}
\item Conclude a large lower bound by summing the terms from Step~\ref{foo:part}.
\end{enumerate}
Note that since $\{I_\ell\}$ and $\{J_\ell\}$ are partitions, they
induce a partition on the wires in the circuit.  Jukna proves
Step~\ref{foo:part} by proving lower bounds on what he calls the
\emph{entropy} of an operator.  He proves a lower bound on the entropy
of an operator by carefully analyzing subfunctions of the operator.
In the case of matrix multiplication, subfunctions are created by
fixing entries in $B$ to be all zero, except for a single cell
$B[k,\ell]$.  Each $I_\ell,J_\ell$ represents a column in $B$ and in
$A \cdot B$ respectively.  By ranging over different $k,\ell$, Jukna
is able to argue that the entropy of matrix multiplication is high.
The details of this argument are technical.

We give a new proof for Step~\ref{foo:part} using an
encoding argument.  The encoder exploits the circuit operations to
encode a $\sqrt{n} \times \sqrt{n}$ matrix $A$.  The encoded message
has length precisely equal to the nubmer of outgoing wires in $I_\ell$
and incoming wires to $J_\ell$.  The argument is very similar
to the arguments in Section~\ref{sec:lower}; we leave it to the full
version of the paper for lack of space.

\begin{theorem}\label{thm:mm-encoding}
  Any circuit $C$ computing boolean matrix multiplication has size $s(C)
  \geq n^{3/2}$.
\end{theorem}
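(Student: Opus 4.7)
My plan is to follow Jukna's three-step template --- partition the inputs and outputs, prove a per-piece lower bound, sum --- but to replace his entropy-of-operators argument for the per-piece step with an encoding argument modeled on the proofs in Section~\ref{sec:lower}. Let $N$ denote the matrix dimension so that $A$ and $B$ are $N \times N$ matrices over $\GF(2)$ and $n = 2N^2$.

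For the partition, I would cut along columns of the second factor: for $\ell = 1, \ldots, N$, take $I_\ell$ to be the $N$ input nodes holding column $\ell$ of $B$, and $J_\ell$ the $N$ output gates holding column $\ell$ of $A \cdot B$. These partition the $B$-inputs and the outputs, so if $\alpha_\ell$ counts the wires leaving $I_\ell$ and $\beta_\ell$ the wires entering $J_\ell$, then $\sum_\ell (\alpha_\ell + \beta_\ell) \leq s(C)$. Proving $\alpha_\ell + \beta_\ell \geq N^2$ for every $\ell$ then gives $s(C) \geq N^3 = \Omega(n^{3/2})$.

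For the per-piece bound I would set up the following encoding game. The encoder receives a uniformly random $A$ (entropy $N^2$) and must transmit $A$ in $\alpha_\ell + \beta_\ell$ bits. He hard-wires every input outside $I_\ell$ to its natural value --- the true $A$ on the $A$-inputs and zero on the columns of $B$ other than $\ell$ --- so that the circuit, viewed as a function of $v \in \{0,1\}^N$ placed in $I_\ell$, computes $v \mapsto A v$. The transmitted message is the joint state of the two ``boundaries'': the $\alpha_\ell$ wires leaving $I_\ell$ (the left boundary $L$) and the $\beta_\ell$ wires entering $J_\ell$ (the right boundary $R$). The decoder knows the full circuit, and mirroring how the decoders in Section~\ref{sec:lower} replay a data structure from probe-time contents, this decoder replays the restricted circuit: because $L$ and $R$ are the only wires that cross between the $(I_\ell,J_\ell)$ side and the rest, running the output gates of $J_\ell$ on $R$ recovers the value $A v$, while $L$ pins down the probe that generated it.

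The main obstacle, and where I expect the technical heart to sit, is that a naive single-probe version of this encoding reveals only the $N$-bit vector $A v$, far short of the $N^2$ bits of $A$. The fix, I expect, mirrors how the Section~\ref{sec:lower} encoders drive their messages with a carefully chosen \emph{family} of probes together with an adversarial selection rule depending on the private input: the encoder picks $v(A)$ (or a short sequence of probes) as a function of $A$ so that for distinct matrices $A \neq A'$ the combined boundary state must disagree, since otherwise the circuit would be forced to output a common value on inputs whose correct matrix products differ. The delicate point is the accounting: one must show that all of this information --- probe, middle-gate behaviour visible through the $L$ and $R$ cuts, and the resulting output --- can be packed into exactly $\alpha_\ell + \beta_\ell$ bits, using the fact that wires on the left carry at most $2^{\alpha_\ell}$ distinct patterns and those on the right at most $2^{\beta_\ell}$, so that the $A$-dependent truth table of ``middle outputs as a function of left-boundary pattern'' is the only new information the decoder ever needs. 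Making this adversarial probe selection explicit and verifying injectivity of the resulting encoding on $\GF(2)^{N \times N}$ is the step I anticipate to be the most technical.
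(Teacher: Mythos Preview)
Your partition and the reduction to a per-column bound $\alpha_\ell + \beta_\ell \geq N^2$ are exactly what the paper does. Where your proposal diverges---and where it has a genuine gap---is in the per-piece encoding argument.

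You correctly diagnose that a single probe $v$ reveals only $Av$, an $N$-bit vector, and that therefore many probes are needed. But your proposed fix---an adversarial choice $v(A)$ together with an injectivity argument, or alternatively encoding ``the $A$-dependent truth table of middle outputs as a function of left-boundary pattern''---does not work within the budget. A single adversarial $v(A)$ cannot distinguish all pairs of matrices (for any fixed $v$ the map $A \mapsto Av$ has a huge kernel), and a truth table keyed on left-boundary patterns has size $\beta_\ell \cdot 2^{\alpha_\ell}$, vastly too large. The accounting you flag as ``the most technical step'' is in fact the whole problem, and the proposal does not contain the idea that resolves it.

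The paper's idea is a \emph{baseline-plus-perturbation} encoding. The encoder probes all $N$ standard basis vectors $e_1, \ldots, e_N$ in column $\ell$ of $B$ (a data-independent choice; $Ae_k$ is the $k$th column of $A$, so together they recover all of $A$). The accounting works because the two parts of the message are asymmetric:
\begin{itemize}
\item Part one records the outputs of all interior gates feeding into $J_\ell$ \emph{once}, at the baseline setting $v = 0$. This costs at most $\beta_\ell$ bits.
\item Part two records, for each $k$, the outputs of only those interior gates wired to the single input node $B[k,\ell]$. Summed over $k$, this costs at most $\alpha_\ell$ bits, since each wire leaving $I_\ell$ is charged exactly once.
\end{itemize}
The decoder can then evaluate every output gate in $J_\ell$ under any setting $v = e_k$: for each interior gate $g$ feeding that output, either $g$ is wired to $B[k,\ell]$ (read its value from part two) or it is not, in which case flipping $B[k,\ell]$ from $0$ to $1$ does not change $g$'s inputs, so $g$'s value equals its baseline value from part one. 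Thus all $N$ probes are answered from an $(\alpha_\ell + \beta_\ell)$-bit message, giving $\alpha_\ell + \beta_\ell \geq H(A) = N^2$. No adversarial probe selection or injectivity analysis is needed.
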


Finally, we provide a strong connection between depth-2 linear
circuits and linear data structures. The connection is almost
immediately established:

\begin{lemma}[Restatement of Lemma~\ref{lem:circuit-intro}]\label{lem:circuit}
  If there is a linear data structure for a problem $\cP$ with query
  time of $t_q$ and update time $t_u$, then there exists a depth-2
  linear circuit $C$ computing $F_\cP$ with size $s(C) \leq n t_u +
  mt_q$.

  If there is a depth-2 linear circuit $C$ computing $F_\cP$, then there
  is a linear data structure for $\cP$ with \emph{average} query time
  at most $s(C)/m$ and average update time at most $s(C)/n$.
\end{lemma}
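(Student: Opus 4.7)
The plan rests on the observation that a linear data structure is essentially already a depth-$2$ XOR circuit written in a different language. Because updates are non-adaptive and simply flip the probed cells, an easy induction on the update sequence shows that cell $c$ always stores the fixed linear form $g_c(A) := \bigoplus_{i : c \in U_i} A_i$, where $U_i \subseteq [S]$ is the cell set probed by update $i$ (and any additive constants coming from preprocessing on $A = 0$ can be absorbed without affecting wire counts, since $F_\cP$ is itself linear). Likewise the answer returned on query $j$ is the fixed combination $\bigoplus_{c \in Q_j} g_c(A)$, where $Q_j$ is the query-probe set. Both halves of the lemma will then reduce to reading off wire counts from the probe sets, or equivalently from the edges of the circuit.

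For the first direction, I would build the circuit whose inputs are $A_1,\dots,A_n$, whose middle layer contains one XOR gate $g_c$ per cell $c$ with incoming wires from $\{A_i : c \in U_i\}$, and whose output layer contains one XOR gate per query $f_j$ with incoming wires from $\{g_c : c \in Q_j\}$. Correctness of the output $z_j = f_j(A)$ is immediate from the cell-contents invariant above, and the wire count is $\sum_{i=1}^n |U_i| + \sum_{j=1}^m |Q_j| \leq n t_u + m t_q$.

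For the second direction, given a depth-$2$ linear circuit with middle gates $g_1,\dots,g_S$ and outputs $z_j = \bigoplus_{k \in P_j} g_k$, I would allocate one $1$-bit cell per middle gate and maintain the invariant that cell $k$ holds $g_k(A)$. Flipping bit $A_i$ is implemented by probing and flipping exactly those cells $k$ such that $A_i$ feeds into $g_k$; answering query $j$ probes the cells in $P_j$ and returns their XOR. The cost of update $i$ is then the out-degree of input $A_i$ in $C$, and the cost of query $j$ is the in-degree of output $z_j$, so summing over all $n$ updates gives a total update cost equal to the number of bottom-layer wires, and summing over all $m$ queries gives a total query cost equal to the number of top-layer wires. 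Since both totals are at most $s(C)$, dividing by $n$ and $m$ respectively yields the claimed average bounds. Beyond the affine-vs-linear bookkeeping noted in the first paragraph, both constructions are direct translations in which probes are put in bijection with wires, so I do not expect any substantive technical obstacle.
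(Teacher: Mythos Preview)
Your proposal is correct and follows essentially the same construction as the paper: cells correspond to middle gates, update-probe incidences become input-to-middle wires, and query-probe incidences become middle-to-output wires, with the wire/degree counts yielding the stated bounds in both directions. The only additional detail you supply is the explicit cell-contents invariant and the affine-vs-linear remark, which the paper leaves implicit.
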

\begin{proof}
  First, suppose there exists a linear data structure solving $\cP$.
  We construct the corresponding depth-2 circuit directly.  Input
  nodes correspond to the $n$ bits of the input (the array $A$ in the
  definition of linear data structures). Output nodes correspond to
  the $m$ possible queries, and there is an interior node for each
  cell in the database. For each update $1\leq i \leq n$ (flip an
  entry of $A$), add edges from $x_i$ to each of the cells updated by
  the data structure. Similarly, add wires $(c_i,z_j)$ whenever the
  $j$th query probes the $i$th cell in the data structure. Correctness
  follows immediately. Finally, note that since updates and queries
  probe at most $t_u$ and $t_q$ cells respectively, the total number
  of wires in the circuit is bounded by $s(C) \leq nt_u + mt_q$.
  
  Constructing a linear data structure from a linear depth-2 circuit
  $C$ is similar. Letting $t_{u,i}$ and $t_{q,j}$ denote the number
  of cells probed during the $i$th update and $j$th query
  respectively, it is easy to see that $s(C) = \sum_{i=1}^n t_{u,i} +
  \sum_{j=1}^m t_{q,j}$.  It follows that the average update time is
  at most $\frac{1}{n}\sum t_{u,i} \leq s(C)/n$, and similarly that
  the average query time is at most $\frac{1}{m}\sum t_{q,j} \leq
  s(C)/m$.
\end{proof}


The main contribution of Lemma~\ref{lem:circuit} is a new range of
candidate hard problems for linear circuits, all inspired by data
structure problems. As mentioned in Section~\ref{sec:nonadapt}, linear
data structures most naturally occur in the field of range
searching. Furthermore, these data structure problems turn out to
correspond precisely to linear operators: Let $P=\{p_1,\dots,p_n\}$ be
a fixed set of $n$ points in $\Real^d,$ and let $\Ran$ be a set of
query ranges, where each $\aRan_i \in \Ran$ is a subset of
$\Real^d$. $P$ and $\Ran$ naturally define a linear operator
$A(P,\Ran) \in \{0,1\}^{|\Ran| \times |P|}$, where the $i$th row of
$A(P,\Ran)$ has a $1$ in the $j$th column if $p_j \in \aRan_i$ and $0$
otherwise. In the light of Lemma~\ref{lem:circuit}, assume a linear
data structure solves the following range counting problem: Given the
fixed set of points $P$, each assigned a weight in $\{0,1\}$, support
flipping the weights of the points (intuitively inserting/deleting the
points) while also supporting to efficiently compute the parity of the
weights assigned to the points inside a query range $\aRan_i \in
\Ran$. Then that linear data structure immediately translates into a
linear circuit for the linear operator $A(P,\Ran)$ and vice
versa. Thus we expect that hard range searching problems of the above
form also provide hard linear operators for linear circuits. The
seemingly hardest range searching problem is \emph{simplex range
  searching}, where we believe that the following holds:

\begin{conjecture}
\label{conj:simplex}
There exists a constant $\eps>0$, a set $\Ran$ of $\Theta(n)$
simplices in $\Real^d$ and a set of $n$ points in $\Real^d$, such that
any data structure solving the above range counting problem (flip
weights, parity queries), must have average query and update time $t_u
t_q = \Omega(n^\eps)$.
\end{conjecture}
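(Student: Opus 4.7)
The plan is to attack Conjecture~\ref{conj:simplex} via the equivalence of Lemma~\ref{lem:circuit}, reducing it to a circuit complexity question. Concretely, it suffices to exhibit a point set $P$ of $n$ points in $\Real^d$ and a family $\Ran$ of $m=\Theta(n)$ simplices such that every depth-$2$ linear circuit over $\GF(2)$ computing the incidence operator $A(P,\Ran) \in \{0,1\}^{m \times n}$ has size $\Omega(n^{1+\delta})$ for some constant $\delta>0$. Indeed, Lemma~\ref{lem:circuit} says that any linear data structure with average times $t_u, t_q$ yields a circuit of size at most $n t_u + m t_q = O(n(t_u+t_q))$, so a lower bound of $\Omega(n^{1+\delta})$ on circuit size forces $\max(t_u,t_q) = \Omega(n^\delta)$; since both times are at least $1$, this gives $t_u t_q = \Omega(n^\delta)$ as required.

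For the choice of $P$ and $\Ran$ I would start with point configurations known to be extremal for semigroup-model simplex range searching --- Chazelle-style lower-bound constructions, or a random point set in $\Real^d$ for $d \geq 2$ --- paired with a family of $\Theta(n)$ simplices realizing a combinatorially maximal collection of incidence patterns (for $d \geq 2$, halfspaces, and hence simplices, can already express $\Theta(n^d)$ distinct subsets of $n$ points, so $n$ carefully chosen simplices suffice to make $A(P,\Ran)$ very heterogeneous). To prove the circuit lower bound I would then lift Jukna's partition-and-entropy framework (the approach underlying Theorem~\ref{thm:mm-encoding}) to this setting: partition the points into spatial cells $I_1,\dots,I_t$ and the simplices into blocks $J_1,\dots,J_t$ according to where their barycenter lies, and bound the wires leaving $I_\ell$ plus those entering $J_\ell$ from below by an encoding argument run against the sub-operator obtained by restricting $A(P,\Ran)$ to $J_\ell \times I_\ell$. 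A convenient feature of the geometric setting is that each such sub-operator is itself an incidence operator of a smaller simplex range-searching instance, which opens the door to a recursive application of the same argument.

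The main obstacle is precisely this last step: proving an $\Omega(n^{1+\delta})$ depth-$2$ linear circuit lower bound for \emph{any} explicit linear operator is currently open. The state of the art is the $\Theta(n(\log n/\log\log n)^2)$ bound of G\'al et al.~\cite{Gal12}, and a polynomial improvement would constitute a major breakthrough in circuit complexity --- exactly the difficulty motivating the authors' introduction of the linear-circuit/linear-data-structure correspondence in the first place. A more realistic intermediate goal is to establish Conjecture~\ref{conj:simplex} under auxiliary structural restrictions on the data structure --- for example, a geometric-locality requirement that each cell store a linear combination of points within a region of bounded diameter --- and then weaken those restrictions, in the same spirit in which the non-adaptive, memoryless, and linear restrictions are progressively motivated in Section~\ref{sec:nonadapt}.
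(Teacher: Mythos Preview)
The statement is a \emph{conjecture}; the paper does not prove it and explicitly presents it as open, using it only to derive a conditional circuit lower bound (the Corollary immediately following it). There is therefore no proof in the paper to compare your proposal against.

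Your proposal is candid about this: you correctly identify that the core step --- an $\Omega(n^{1+\delta})$ size lower bound for depth-$2$ linear circuits computing an explicit linear operator --- is a well-known open problem, well beyond the current $\Theta(n(\log n/\log\log n)^2)$ barrier of~\cite{Gal12}. Two remarks on the plan itself. First, even if that circuit step succeeded, Lemma~\ref{lem:circuit} only relates \emph{linear} data structures and linear circuits, so your argument would establish Conjecture~\ref{conj:simplex} only for linear data structures; the conjecture as stated is about arbitrary data structures. Second, the direction you propose is the reverse of the paper's: the paper assumes the conjecture (restricted to linear data structures) and deduces a circuit lower bound, whereas you propose to prove the circuit lower bound first and deduce the data-structure bound. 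That reversal is logically valid, but its difficulty is exactly the point the paper is making --- the data-structure conjecture is being offered as a \emph{source} of candidate hard operators for circuits, not as something to be derived from circuit bounds.
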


We have toned down Conjecture~\ref{conj:simplex} somewhat, since the
community generally believe $\eps$ can be replaced by $1-1/d$, but to be
on the safe side we only conjecture the above. In the circuit setting,
this conjecture translates to
\begin{corollary}
If Conjecture~\ref{conj:simplex} is true for linear data structures,
then there exists a constant $\delta>0$, a set $\Ran$ of $\Theta(n)$
simplices in $\Real^d$ and a set $P$ of $n$ points, such that any
linear circuit computing the linear operator $A(P,\Ran)$ must have
$\Omega(n^{1+\delta})$ wires.
\end{corollary}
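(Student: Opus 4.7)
The plan is to argue by contrapositive: assume that for every $\delta>0$ there is a linear circuit $C$ of size $s(C)=o(n^{1+\delta})$ computing $A(P,\Ran)$, and then use the second half of Lemma~\ref{lem:circuit} to manufacture a linear data structure whose update--query product violates Conjecture~\ref{conj:simplex}. The whole argument is essentially a parameter-matching exercise on top of Lemma~\ref{lem:circuit}; the ``hard part'' was already done there.

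First, I would fix any candidate $(P,\Ran)$ witnessing Conjecture~\ref{conj:simplex}, so that $|P|=n$, $m=|\Ran|=\Theta(n)$, and the operator $A(P,\Ran)$ encodes exactly the range counting problem in the conjecture. A linear depth-2 circuit computing this operator computes $F_\cP$ for the corresponding linear problem $\cP$, so the second half of Lemma~\ref{lem:circuit} applies and yields a linear data structure with average update time at most $s(C)/n$ and average query time at most $s(C)/m = \Theta(s(C)/n)$.

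Next, I would multiply these two bounds. The product of the average update time and the average query time is at most a constant times $s(C)^2/n^2$. If $s(C)=o(n^{1+\delta})$, this product is $o(n^{2\delta})$. Picking $\delta = \eps/3$ (any value strictly smaller than $\eps/2$ suffices; taking $\eps/3$ leaves slack) makes $o(n^{2\delta}) = o(n^{2\eps/3})$, which contradicts the lower bound $t_u t_q = \Omega(n^\eps)$ asserted by Conjecture~\ref{conj:simplex} for the same $(P,\Ran)$. Hence $s(C) = \Omega(n^{1+\delta})$ with $\delta = \eps/3$, proving the corollary.

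The only subtlety to double-check is that the conjecture's $t_u t_q$ refers to \emph{average} update and query times (matching what Lemma~\ref{lem:circuit} supplies), and that $m=\Theta(n)$ is indeed the setting used by the conjecture so that the average query time bound $s(C)/m$ is asymptotically the same as $s(C)/n$; both hold by construction. No further estimate is required.
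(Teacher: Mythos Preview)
Your argument is correct and is exactly the route the paper intends: the corollary is stated without proof in the paper, as it follows directly from the second half of Lemma~\ref{lem:circuit} combined with Conjecture~\ref{conj:simplex}. Your choice of $\delta=\eps/3$ (or any $\delta<\eps/2$) and the bound $t_u t_q \le \Theta(s(C)^2/n^2)$ are precisely the parameter computations one needs; your check that the conjecture is stated for \emph{average} times, matching what Lemma~\ref{lem:circuit} supplies, is the one point worth making explicit.

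One minor presentational remark: your opening sentence phrases the contrapositive as ``for every $\delta>0$ there is a small circuit,'' but you never actually use this universal quantifier---you immediately pick a single $\delta=\eps/3$ and argue directly. It would be cleaner to state the proof positively: fix $(P,\Ran,\eps)$ from the conjecture, set $\delta=\eps/3$, and show that any linear circuit of size $o(n^{1+\delta})$ would yield, via Lemma~\ref{lem:circuit}, a linear data structure violating the conjecture. This is what your body does anyway.
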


Furthermore, the research on data structure lower bounds also provide
a lot of insight into which concrete sets $P$ and $\Ran$ that might be
difficult. More specifically, polynomial lower bounds for simplex
range searching has been proved for: range reporting in the pointer
machine~\cite{chazelle:simplex,Afshani.simplex} and
I/O-model~\cite{Afshani.simplex}, range searching in the semi-group
model~\cite{chazelle:polytope} and range searching in the group
model~\cite{larsen:group,larsen:improved}. The group model comes
closest in spirit to linear data structures. A data structure in the
group model is essentially a linear data structure, where instead of
storing linear combinations over $\GF(2)$, we store linear
combinations with integer coefficients (and no mod
operations). Similarly, queries are answered by computing linear
combinations over the stored elements, but with integer coefficients
and not over $\GF(2)$. The properties used to drive home range
searching lower bounds in the group model are:
\begin{itemize}
\item If $A(P,\Ran)$ has polynomial red-blue discrepancy, then any
  group model data structure must have $t_u t_q = \Omega(n^\eps)$ for some
  constant $\eps>0$.
\item If $A(P,\Ran)$ has $\Omega(n)$ eigenvalues that are polynomial,
  then any group model data structure must have $t_u t_q =
  \Omega(n^\eps)$ for some constant $\eps>0$.
\item If $|\aRan_i \cap P|$ is polynomial for all $\aRan_i \in \Ran$
  and $|\aRan_i \cap \aRan_j \cap P|=O(1)$ for all $i \neq j$, then
  any group model data structure must have $t_u t_q = \Omega(n^\eps)$
  for some constant $\eps>0$.
\end{itemize}
The last property directly translates to $A(P,\Ran)$ having rows and
columns with polynomially many $1$s and any two rows/columns having a
constant number of $1$s in common. Given the tight correspondence
between group model data structures and linear data structures, we
believe these properties are worth investigating in the circuit
setting.  Furthermore, a concrete set of $n$ points $P$ and a set of
$\Theta(n)$ simplices $\Ran$, with all three properties, is known even
in $\Real^2$. This example can be found in~\cite{chazelle:fracLB},
where it is stated for $\Ran$ being lines (i.e. degenerate
simplices). Note that the lower bound in~\cite{chazelle:fracLB} is for
range reporting in the pointer machine, but using the observations
in~\cite{larsen:group,larsen:improved} it is easily seen that all the
above properties hold.

Even if these properties are not enough to obtain lower bounds for
linear operators, we believe the geometric approach might be useful in
its own right.

\section{Conclusion}
\label{sec:concl}
In this paper, we have studied the role non-adaptivity plays in
dynamic data structures. Surprisingly, we were able to prove
polynomially high lower bounds for such data structures. Perhaps more
importantly, we believe our results shed much new light on the
current polylogarithmic barriers if we do not make any restrictions on
data structures. We also presented an interesting connection between
data structures and depth-2 circuits. The connection between linear
operators and range searching is particularly intriguing, revealing a
number of new properties to investigate further in the realm of
circuit lower bounds.

\section{Acknowledgements}
We are grateful to Elad Verbin for several helpful discussions.  

\appendix
\section{A Lower Bound Proof for Matrix Multiplication}
\begin{theorem}[Restatement of Theorem~\ref{thm:mm-encoding}]
  Any circuit $C$ computing boolean matrix multiplication has size $s(C)
  \geq n^{3/2}$.
\end{theorem}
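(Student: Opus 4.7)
My plan is to carry out Step~\ref{foo:part} of the Jukna framework outlined in Section~\ref{sec:circuit} using an encoding argument in the spirit of Section~\ref{sec:lower}. Write $N = \sqrt{n}$ and view the inputs as two $N \times N$ matrices $A,B$ over $\GF(2)$, with the outputs being the $N \times N$ product $A \cdot B$. Partition the $B$-input nodes into $N$ sets $I_1,\dots,I_N$, where $I_\ell$ consists of the $N$ inputs encoding the $\ell$th column of $B$, and partition the output gates into $J_1,\dots,J_N$, where $J_\ell$ consists of the $N$ outputs encoding the $\ell$th column of $A \cdot B$. The target is to prove that for every $\ell$ the number of wires leaving $I_\ell$ plus the number of wires entering $J_\ell$ is at least $n$. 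Summing over $\ell$ and observing that these two quantities count disjoint subsets of wires in $C$ (input-to-interior edges out of $B$-inputs on one side, and interior-to-output edges on the other) will immediately yield $s(C) \geq N \cdot n = n^{3/2}$.

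To prove the per-$\ell$ inequality, fix $\ell$ and draw $A$ uniformly from $\bits^{N \times N}$; I will describe a protocol in which an encoder transmits $A$ using at most $|\mathrm{out}(I_\ell)| + |\mathrm{in}(J_\ell)|$ bits. Since $H(A) = n$, this forces the desired bound. The encoder hard-codes $A$ into the input layer and considers $N+1$ specific $B$-settings: the all-zero setting $B = 0$ and, for each $k \in [N]$, the setting $B^{(k)}$ with $B^{(k)}[k,\ell] = 1$ and all other entries zero. Let $G$ be the set of interior gates with at least one outgoing wire into $J_\ell$, and let $b_g$ denote the number of wires from $I_\ell$ into $g \in G$. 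The encoder transmits (i) the value of $g$ under $B = 0$ for every $g \in G$, and (ii) for each $g \in G$ and each $k \in [N]$ such that $B[k,\ell]$ is wired into $g$, the value of $g$ under $B^{(k)}$. The total number of bits is $|G| + \sum_{g \in G} b_g \leq |\mathrm{in}(J_\ell)| + |\mathrm{out}(I_\ell)|$.

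The decoder recovers $A$ column by column: for each $k \in [N]$, the product $A \cdot B^{(k)}$ has its $\ell$th column equal to the $k$th column of $A$ and all other columns zero, so reading the output values at $J_\ell$ under $B^{(k)}$ gives the $k$th column of $A$. To compute those output values, the decoder needs the value of every $g \in G$ under $B^{(k)}$. If $B[k,\ell]$ is not wired into $g$, then the $A$-inputs to $g$ are unchanged and every $B$-input feeding $g$ is zero both under $B = 0$ and under $B^{(k)}$, so $g$'s value is exactly the bit transmitted in part (i); otherwise it is one of the supplementary bits transmitted in part (ii). Running this for $k = 1,\dots,N$ reconstructs every column of $A$, hence all of $A$, completing the encoding argument.

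The step I expect to require the most care is the decoding, because interior gates compute arbitrary functions and one might worry that $g$'s value could shift in complicated ways when $B$ changes. What makes the argument go through is precisely that $A$ is held fixed across all $N+1$ scenarios and that the other $B$-columns are held at zero throughout, so the only variable $B$-input that can possibly affect $g$ is $B[k,\ell]$ itself, and this affects $g$ only if it is actually wired into $g$. Because the circuit structure is public, the decoder can parse the supplementary bits in (ii) without additional addressing overhead, which is what keeps the encoding tight enough to yield $s(C) \geq n^{3/2}$ with no loss in the constant.
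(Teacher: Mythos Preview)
Your proposal is correct and follows essentially the same encoding argument as the paper: the same partition of $B$-columns into $I_\ell$ and output-columns into $J_\ell$, the same ``baseline'' message of the interior gates feeding $J_\ell$ evaluated at $B=0$, and the same supplementary bits for gates wired to $B[k,\ell]$ under the single-one settings $B^{(k)}$, recovering one column of $A$ per $k$. The only cosmetic difference is that you restrict the supplementary bits in (ii) to gates $g\in G$, yielding the slightly tighter intermediate count $|G|+\sum_{g\in G} b_g$, which of course is still bounded by $|\mathrm{in}(J_\ell)|+|\mathrm{out}(I_\ell)|$ and leads to the identical conclusion.
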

\begin{proof}
  Fix a circuit $C$.  Let $P = A\cdot B$.
  For $1 \leq \ell \leq \sqrt{n}$, let $I_\ell$ denote the $\ell$th
  column of $B$; that is, $I_\ell$ consists of all inputs
  corresponding to $B[k,\ell]$ for some $k$.  Similarly, $J_\ell$ is
  the set of all outputs corresponding to the $\ell$th column of $P$;
  that is, all outputs given by $P[k,\ell]$ for some $k$.
  Let $t_{u,\ell}$ denote the number of wires leaving inputs in
  $I_\ell$.  Similarly, let $t_{q,\ell}$ denote the number of wires
  entering outputs in $J_\ell$.
  \begin{claim}\label{claim:mm-encoding}
    For any $\ell$, we have $t_{u,\ell} + t_{q,\ell} \geq n$.
  \end{claim}
  Before proving this claim, note that Theorem~$\ref{thm:mm-encoding}$
  follows directly, since there are $\sqrt{n}$ pairs $(I_\ell,
  J_\ell)$ and the wires corresponding to each pair are disjoint.
\end{proof}
  \begin{proof}[Proof of Claim~\ref{claim:mm-encoding}]
    This proof will involve an encoding argument.  The encoder will
    receive a $\sqrt{n} \times \sqrt{n}$ boolean matrix $M$, where $M$
    is drawn uniformly amongst all such boolean matrices. He will then
    use the matrix multiplication circuit to encode $M$ in such a way
    that the size of the encoding depends on the wires leaving
    $I_\ell$ and entering $J_\ell$.
    \paragraph{Encoding Procedure.}
    The encoder receives $M$.  As a first step, he sets $A[i,j]
    \leftarrow M[i,j]$ for all $i,j$; he also sets all entries in $B$ to zero.  He
    then writes down the output of all interior gates adjacent
    to an output in $J_\ell$.  In the second step, for each $1 \leq k
    \leq \sqrt{n}$, the encoder performs the following: he sets $B[k,\ell]
    \leftarrow 1$ and sets all other entries in $B$ to zero.  He then
    writes down the output of all interior gates adjacent to
    $B[k,\ell]$.  This completes the encoding procedure.
    \paragraph{Decoding Procedure.}
    Note that $P[i,\ell] = \sum_j A[i,j]B[j,\ell]$.  In particular,
    when $B$ consists of a $1$ in entry $[k,\ell]$ and zero in all
    other entries, then the $\ell$th column of $P$ corresponds to the
    $k$th column of $A$.  The decoder thus recovers the $k$th column
    of $M$ by using $C$ to compute the $\ell$th column of $P$, i.e., by
    querying all outputs in $J_\ell$.  For each output gate in
    $J_\ell$, she looks at all interior gates adjacent to it.  For
    each of \emph{these} gates $g$, the decoder checks to see if $g$
    is adjacent to the input gate $B[k,\ell]$.  If so, then she
    recovers the correct output value of this gate from the second
    part of the encoding.  Otherwise, she recovers the correct output
    from the first part (noting that in this case, changing the value
    of $B[k,\ell]$ does not affect $g$).  In this way, the decoder
    recovers the $\ell$th column of $C$, which is also the $k$th
    column of $A$, which is again the $k$th column of $M$. Doing this
    for all $k$ completes the decoding.
    \paragraph{Analysis.}
    The first part of the encoding consists of the output of each
    interior gate adjacent to at least one output in $J_\ell$.  Thus,
    the first part of the encoding can be described in at most
    $t_{q,\ell}$ bits.  The second part of the encoding consists of
    the output of each interior gate adjacent to each input node in
    $I_\ell$.  This requires at most $t_{u,\ell}$ bits.  Thus,
    the total length of the encoding is at most $t_{u,\ell} +
    t_{q,\ell}$.  The decoder recovers all of $M$ from this message.
    Since each entry of $M$ is independent and uniform, $H(M) = n$.
    Thus, $t_{u,\ell} + t_{q,\ell} \geq n$.
  \end{proof}

\paragraph{Remark.}
  As mentioned previously, Jukna proves his lower bounds by defining
  the \emph{entropy} of an operator.  He lower bounds the wire
  complexity of a circuit by the entropy of the operator it computes.
  He proves a lower bound on the entropy of an operator by carefully
  analyzing subfunctions of the operator, created by fixing subsets of
  the variables to specific values and considering the induced
  function on the remaining variables.

  Parts of Jukna's proof are similar in spirit to ours.  In
  particular, the way we encode $M$ by fixing the matrix $B$ to be one
  in entry $[k,\ell]$ and zero elsewhere corresponds to the
  subfunctions Jukna considers in his proof.  In fact, we believe that
  \emph{any} lower bound provable using Jukna's technique can also be
  proved using our method.  Our advantage is in replacing Jukna's
  technical and somewhat complicated machinery with a simple encoding
  argument.


\begin{thebibliography}{10}

\bibitem{Afshani.simplex}
P.~Afshani.
\newblock Improved pointer machine and {I/O} lower bounds for simplex range
  reporting and related problems.
\newblock In {\em Proc. 28th ACM Symposium on Computational Geometry}, 2012.

\bibitem{chan:mode}
T.~M. Chan, S.~Durocher, K.~G. Larsen, J.~Morrison, and B.~T. Wilkinson.
\newblock Linear-space data structures for range mode query in arrays.
\newblock {\em ACM Trans. Comput. Syst.}, 2012.
\newblock To appear.

\bibitem{chazelle:polytope}
B.~Chazelle.
\newblock Lower bounds on the complexity of polytope range searching.
\newblock {\em Journal of the American Mathematical Society}, 2(4):pp.
  637--666, 1989.

\bibitem{chazelle:fracLB}
B.~Chazelle and D.~Liu.
\newblock Lower bounds for intersection searching and fractional cascading in
  higher dimension.
\newblock In {\em Proc. 33rd ACM Symposium on Theory of Computation}, pages
  322--329, 2001.

\bibitem{chazelle:simplex}
B.~Chazelle and B.~Rosenberg.
\newblock Simplex range reporting on a pointer machine.
\newblock {\em Computational Geometry: Theory and Applications}, 5:237--247,
  January 1996.

\bibitem{Cherukhin05}
D.~Y. Cherukhin.
\newblock The lower estimate of complexity in the class of schemes of depth 2
  without restrictions on a basis.
\newblock {\em Vestnik Moscow University Series 1, Mathematika}, 60(4):54--56,
  2005.

\bibitem{Fredman:chrono}
M.~Fredman and M.~Saks.
\newblock The cell probe complexity of dynamic data structures.
\newblock In {\em Proc 21st ACM Symposium on Theory of Computation}, pages
  345--354, 1989.

\bibitem{Gal12}
A.~G\'{a}l, K.~A. Hansen, M.~Kouck\'{y}, P.~Pudl\'{a}k, and E.~Viola.
\newblock Tight bounds on computing error-correcting codes by bounded-depth
  circuits with arbitrary gates.
\newblock In {\em Proc. 44th ACM Symposium on Theory of Computation}, pages
  479--494, 2012.

\bibitem{gal:succinct}
A.~G\'{a}l and P.~B. Miltersen.
\newblock The cell probe complexity of succinct data structures.
\newblock {\em Theoretical Computer Science}, 379:405--417, July 2007.

\bibitem{Jukna10}
S.~Jukna.
\newblock Entropy of operators or why matrix multiplication is hard for
  depth-two circuits.
\newblock {\em Theory of Computing Systems}, 46:301--310, 2010.

\bibitem{larsen:group}
K.~G. Larsen.
\newblock On range searching in the group model and combinatorial discrepancy.
\newblock In {\em Proc. 52nd IEEE Symposium on Foundations of Computer
  Science}, pages 542--549, 2011.

\bibitem{larsen:dynamic_count}
K.~G. Larsen.
\newblock The cell probe complexity of dynamic range counting.
\newblock In {\em Proc. 44th ACM Symposium on Theory of Computation}, pages
  85--94, 2012.

\bibitem{larsen:staticloga}
K.~G. Larsen.
\newblock Higher cell probe lower bounds for evaluating polynomials.
\newblock In {\em Proc. 53rd IEEE Symposium on Foundations of Computer
  Science}, 2012.
\newblock To appear.

\bibitem{larsen:improved}
K.~G. Larsen and H.~L. Nguyen.
\newblock Improved range searching lower bounds.
\newblock In {\em Proc. 28th ACM Symposium on Computational Geometry}, pages
  171--178, 2012.

\bibitem{milt:asym}
P.~B. Miltersen, N.~Nisan, S.~Safra, and A.~Wigderson.
\newblock On data structures and asymmetric communication complexity.
\newblock {\em Journal of Computer and System Sciences}, 57(1):37--49, 1998.

\bibitem{pani:metric}
R.~Panigrahy, K.~Talwar, and U.~Wieder.
\newblock Lower bounds on near neighbor search via metric expansion.
\newblock In {\em Proc. 51st IEEE Symposium on Foundations of Computer
  Science}, pages 805--814, 2010.

\bibitem{patrascu10mp-3sum}
M.~P{\v a}tra{\c s}cu.
\newblock Towards polynomial lower bounds for dynamic problems.
\newblock In {\em Proc. 42nd ACM Symposium on Theory of Computation}, pages
  603--610, 2010.

\bibitem{Patrascu:loga}
M.~P{\v a}tra{\c s}cu and E.~D. Demaine.
\newblock Logarithmic lower bounds in the cell-probe model.
\newblock {\em SIAM Journal on Computing}, 35:932--963, April 2006.

\bibitem{patrascu10higher}
M.~P{\v a}tra{\c s}cu and M.~Thorup.
\newblock Higher lower bounds for near-neighbor and further rich problems.
\newblock {\em SIAM Journal on Computing}, 39(2):730--741, 2010.

\bibitem{Valiant77}
L.~Valiant.
\newblock Graph-theoretic methods in low-level complexity.
\newblock In {\em Proc. 6th International Symposium on Mathematical Foundations
  of Computer Science}, pages 162--176, 1977.

\bibitem{yao:cellprobe}
A.~C.~C. Yao.
\newblock Should tables be sorted?
\newblock {\em Journal of the ACM}, 28(3):615--628, 1981.

\end{thebibliography}
\end{document}